\newcommand*{\textlabel}[2]{%
  \edef\@currentlabel{#1}
  \phantomsection
  #1\label{#2}
}
\newtheoremstyle{custom}
  {3pt}
  {3pt}
  {\slshape}
  {}
  {\bfseries}
  {.}
  { }
   {}
\theoremstyle{custom}
\newtheorem{theorem}{Theorem}[section]
\newtheorem{proposition}[theorem]{Proposition}
\newtheorem{proposition/definition}[theorem]{Proposition/Definition}
\theoremstyle{definition}
\theoremstyle{remark}
\newtheorem{remark}[theorem]{Remark}
\newtheoremstyle{exercise}
  {3pt}
  {6pt}
  {}
  {}
  {\bfseries}
  {:}
  { }
   {}
\theoremstyle{exercise}
\newtheorem{exercise}[theorem]{Exercise}
\newtheoremstyle{exercises}
  {3pt}
  {6pt}
  {}
  {}
  {\bfseries}
  {:}
  {\newline}
   {}
\theoremstyle{exercise}
\newtheorem{exercises}[theorem]{Exercises}
\def\boxit#1{\vbox{\hrule height1pt\hbox{\vrule width1pt\kern3pt
  \vbox{\kern3pt#1\kern3pt}\kern3pt\vrule width1pt}\hrule height1pt}}
\def\trank{\text{rank}}
\def\BC{\mathbb C}
\def\BP{\mathbb P}
\def\pp#1{\mathbb P^{#1}}
\def\pp#1{{\mathbb P}^{#1}}
\def\tdim{{\rm dim}}
\def\hd{,...,}
\def\inv{{}^{-1}}
\def\trace{{\rm trace}}
\def\cS{{\mathcal S}}
\def\11{\mathbf 1}
\def\fsl{{\mathfrak {sl}}}
\def\l{\lambda}
\def\o{\omega}
\def\ot{{\mathord{ \otimes } }}
\def\op{{\mathord{\,\oplus }\,}}
\def\otc{{\mathord{\otimes\cdots\otimes}\;}}
\def\ra{{\mathord{\;\rightarrow\;}}}
\def\La#1{\Lambda^{#1}}
\def\frak{\mathfrak}
\def\fsl{\frak s\frak l}
\def\op{\oplus}
\def\BZ{\Bbb Z}
\def\bb#1#2#3{b^{#1}_{{#2}{#3}}}
\def\op{\oplus}
\def\l{\lambda}
\def\FS{\mathfrak  S}
\def\BP{\mathbb  P}
\def\BC{\mathbb  C}
\def\pp#1{\mathbb  P^{#1}}
\def\cc#1#2{C^{#1}_{#2}}
\def\hd{, \hdots ,}
\def\inv{{}^{-1}}
\def\La#1{\Lambda^{#1}}
\def\pp#1{\mathbb  P^{#1}}
\def\ra{\rightarrow}
\def\ttrace{\operatorname{trace}}
\def\tdim{\operatorname{dim}}
\def\trank{\operatorname{rank}}
\def\be{\begin{equation}}
\def\ene{\end{equation}}
\def\aa#1#2{a^{#1}_{#2}}
\def\bb#1#2{b^{#1}_{#2}}
\def\cc#1#2{c^{#1}_{#2}}
\def\G{\Gamma}
\newcommand{\Id}{\operatorname{Id}}
\def\Mn{M_{\langle \nnn \rangle}}\def\Mthree{M_{\langle 3\rangle}}
\def\Mtwo{M_{\langle 2\rangle}}\def\Mthree{M_{\langle 3\rangle}}
\def\aa#1#2{a^{#1}_{#2}}
\def\bb#1#2{b^{#1}_{#2}}
\def\trank{{\mathrm {rank}}}
\def\nnn{\bold n}
\begin{document}

\author{Luca Chiantini}
\address{Dipartimento di Ingegneria dell'Informazione e Scienze Matematiche, Universita' di Siena,  53100 Siena, Italy}
\email{luca.chiantini@unisi.it}

\author{Christian Ikenmeyer}
\address{Max Planck Institute for Informatics, Saarland Informatics Campus, Building E1.4, D-66123 Saarbr\"ucken, Germany
}
\email{cikenmey@mpi-inf.mpg.de}

\author{J.M. Landsberg}
\address{
Department of Mathematics\\
Texas A\&M University\\
Mailstop 3368\\
College Station, TX 77843-3368, USA}
\email{jml@math.tamu.edu}
\author{Giorgio Ottaviani}
\address{Dipartimento di Matematica e Informatica U. Dini, viale Morgagni 67/A, 50134 Firenze, Italy }
\email{ottavian@math.unifi.it}
\thanks{Landsberg supported by NSF grant DMS-1405348.}
\title[Geometry and Strassen's algorithm]{The geometry of rank decompositions of matrix multiplication I: $2\times 2$ matrices}
\keywords{ matrix multiplication, tensor rank, tensor decomposition, symmetries , MSC 68Q17, 14L30, 15A69}
\begin{abstract}
This is the first in a series of papers on rank decompositions of the matrix multiplication tensor.
In this paper we:  establish  general facts about rank decompositions of tensors,
describe potential ways to search for new matrix multiplication decompositions, give a
geometric proof of the theorem of \cite{DBLP:journals/corr/Burichenko14} establishing the symmetry group
of Strassen's algorithm, and present  two  particularly nice subfamilies  in the Strassen family
of decompositions.
\end{abstract}
\maketitle

\section{Introduction}
This is the first in a planned series of papers
on the geometry of  rank decompositions of the matrix multiplication tensor
$\Mn\in \BC^{\nnn^2}\ot \BC^{\nnn^2}\ot\BC^{\nnn^2}$. Our goals for the series are
to determine possible symmetry groups for potentially optimal (or near optimal)  decompositions of the matrix
multiplication tensor and  eventually to derive new decompositions based on symmetry
assumptions.
 In this paper we study Strassen's rank $7$ decomposition of $\Mtwo$, which we denote $ \cS tr$.
In the next paper \cite{BILR}   new decompositions of $\Mthree$ are presented and   their symmetry groups are described. 
Although this project began before the papers  \cite{DBLP:journals/corr/Burichenko14,DBLP:journals/corr/Burichenko15} 
appeared, we have benefited greatly from them in our study.

We begin in \S\ref{strvo} by reviewing Strassen's algorithm as a tensor decomposition.
Then in \S\ref{symfam} we explain basic facts about rank decompositions of tensors with symmetry, in particular,
that the decompositions come in families, and each member of the family has the same abstract symmetry group.
While these abstract groups are all the same, for practical purposes (e.g., looking for new decompositions),
some realizations are more useful than others. We review the symmetries of the matrix multiplication tensor
in \S\ref{mnsymrev}. After these generalities, in \S\ref{strfam}
we revisit the Strassen family and display a particularly convenient subfamily. We examine the Strassen family
from a projective perspective in \S\ref{projper}, which renders much of its symmetry transparent. Generalities
on the projective perspective enable  a very short proof of 
the upper bound in Burichenko's determination
of the symmetries of Strassen's decomposition  \cite{DBLP:journals/corr/Burichenko14}. The projective perspective
and emphasis on symmetry also enable  two geometric proofs that Strassen's expression actually is a decomposition
of $\Mtwo$, which we explain in \S\ref{proveisstr}.

 \subsection*{Notation and conventions} $A,B,C,U,V,W$ are vector spaces, $GL(A)$ denotes the group of invertible
 linear maps $A\ra A$,   and
 $PGL(A)=GL(A)/\BC^*$ the group of projective transformations of projective space $\BP A$.
 If $a\in A$, $[a]$ denotes the corresponding point in projective space. $\FS_d$ denotes the permutation group on $d$ elements. 
 Irreducible representations of $\FS_d$ are indexed by partitions. We let $[\pi]$ denote the irreducible $\FS_d$ module
 associated to the partition $\pi$. 

\subsection*{Acknowledgements} This project began at a November 2013 \lq\lq Research in pairs\rq\rq\ program
at Mathematisches Forschungsinstitut Oberwolfach. The authors thank the institute for its hospitality and a great work environment.
{CI was at Texas A\&M University during most of the time this research was conducted.}

\section{Strassen's algorithm}\label{strvo}
 In 1968, V. Strassen set out to prove the standard  algorithm
 for multiplying $\nnn\times \nnn$ matrices was optimal in the sense that no algorithm using
  fewer multiplications exists. Since he anticipated this would be difficult
  to prove, he tried to show it just for two by two matrices. His spectacular failure opened up a whole new area of research:
  Strassen's algorithm for multiplying 
$2\times 2$ matrices $a,b$   using  seven scalar multiplications \cite{Strassen493} is as follows: 
Set
\begin{align*}
\label{starptwo}I&= (\aa 11 + \aa 22)(\bb 11 + \bb 22),\\ 
\nonumber II&=(\aa 21 + \aa 22)\bb 11, \\
\nonumber III&= \aa 11(\bb12-\bb 22)\\
\nonumber IV&=\aa 22(-\bb 11+\bb 21)\\
\nonumber V&=(\aa 11+\aa 12)\bb 22\\
\nonumber VI&= (-\aa 11+\aa 21)(\bb11 +\bb12),\\
\nonumber VII&=(\aa 12 -\aa 22)(\bb 21 + \bb 22).
\end{align*}

Set
\begin{align*}
\cc 11&= I+IV-V+VII,\\
\cc 21&= II+IV,\\
\cc 12 &= III + V,\\
\cc 22 &= I+III-II+VI.
\end{align*} 
Then $c=ab$.

To better see symmetry,    view  matrix multiplication as a trilinear map $(X,Y,Z)\mapsto \ttrace(XYZ)$
and in tensor form. To view it more invariantly,
let $U,V,W=\BC^2$, let $A=U^*\ot V$, $B=V^*\ot W$, $C=W^*\ot U$ and
consider $\Mtwo\in (V\ot U^*)\ot (W\ot V^*)\ot (U\ot W^*)$, where $\Mtwo=\Id_U\ot \Id_V\ot\Id_W$ with the factors re-ordered
(see, e.g., \cite[\S 2.5.2]{MR2865915}).
Write
\be\label{stdvects}
u_1=\begin{pmatrix} 1\\ 0\end{pmatrix}, \ u_2=\begin{pmatrix} 0\\ 1\end{pmatrix}, \ 
u^1=(1,0) \ u^2=(0,1)
\ene
and set $v_j=w_j=u_j$ and $v^j=w^j=u^j$. 
Then Strassen's algorithm becomes the following tensor decomposition
\begin{align}
\label{stra}\Mtwo=&(v_1u^1 +v_2u^2)\ot (w_1v^1 +w_2v^2)\ot(u_1w^1 +u_2w^2)\\
\label{strb}&[v_1u^1\ot w_2(v^1-v^2)\ot (u_1+u_2)w^2\\
&\nonumber 
+(v_1+v_2)u^2\ot w_1v^1\ot u_2(w^1-w^2)   \\
&\nonumber +v_2(u^1-u^2)\ot (w_1+w_2)v^2\ot u_1w^1]\\
\label{strc}&+[v_2u^2\ot w_1(v^2-v^1)\ot (u_1+u_2)w^1\\
&\nonumber +(v_1+v_2)u^1\ot w_2v^2\ot u_1(w^2-w^1)   \\
&\nonumber +v_1(u^2-u^1)\ot (w_1+w_2)v^1\ot u_2w^2].
\end{align}
 Note that this is the sum of seven rank one tensors, 
  while the standard algorithm in tensor format has   eight rank one summands. 

Introduce the notation
$$
\langle v_i  u^j \ot w_k  u^l\ot u_p  w^q\rangle_{\BZ_3}
:= v_i  u^j \ot w_k  u^l\ot u_p  w^q
+
v_k  u^l \ot w_p  u^q\ot u_i  w^j
+
v_p  u^q \ot w_i u^j\ot u_k  w^l.
$$
Then the decomposition  becomes
\begin{align}
\label{stra3}\Mtwo=&(v_1u^1 +v_2u^2)\ot (w_1v^1 +w_2v^2)\ot(u_1w^1 +u_2w^2)\\
&+\label{strb3} \langle v_1u^1\ot w_2(v^1-v^2)\ot (u_1+u_2)w^2\rangle_{\BZ_3}\\
\label{strc3}& 
-\langle v_2u^2\ot w_1(v^1-v^2)\ot (u_1+u_2)w^1\rangle_{\BZ_3}.
\end{align}

 From this presentation we immediately see there is a cyclic $\BZ_3$ symmetry by cyclically permuting the factors
$A,B,C$. The $\BZ_3$ acting on the rank one elements in the decomposition has three orbits
If we exchange  $u_1\leftrightarrow u_2$, $u^1\leftrightarrow u^2$, $v^1\leftrightarrow v^2$, etc., 
the decomposition is also preserved
by this $\BZ_2$, with orbits \eqref{stra3} and the
exchange of the  triples,  call this an {\it internal} $\BZ_2$. These symmetries are only part of the picture. 

\section{Symmetries  and families}\label{symfam}

Let $T\in (\BC^N)^{\ot k}$.
We say $T$ has {\it rank one} if $T=a_1\otc a_k$ for some $a_j\in \BC^N$. Define the {\it symmetry group} of $T$,  $G_T\subset (GL_N^{\times k})\ltimes \FS_k$ to be the subgroup  preserving $T$, where $\FS_k$ acts by permuting the factors. 
 

 For a rank decomposition   $T=\sum_{j=1}^r t_j$ with each $t_j$ of tensor rank one, define the set
$\cS :=\{t_1\hd t_r\}$, which we also call the decomposition,  and
the {\it symmetry group of the decomposition}  $\G_{\cS }:=\{ g\in G_T\mid g\cdot \cS =\cS \}$.
 Let $\G_{\cS}'=\G_{\cS} \cap (GL(A)\times GL(B)\times GL(C))$.  Let $\cS tr$ denote Strassen's 
decomposition of $\Mtwo$.

If $g\in G_T$, then  $g\cdot \cS :=\{ gt_1\hd gt_r\}$ is also a rank decomposition of $T$.
Moreover:

\begin{proposition}\label{conjprop} For $g\in G_T$,  $\G_{g\cdot \cS }= g\G_{\cS } g\inv$.
\end{proposition}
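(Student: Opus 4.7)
The plan is to prove this by direct verification via the definition, showing each inclusion in turn. The statement is essentially the tautology that stabilizers transform by conjugation under a group action, applied to the action of $G_T$ on the set of rank decompositions of $T$.

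First I would unpack the definitions: $\Gamma_{\mathcal{S}} = \{h \in G_T \mid h \cdot \mathcal{S} = \mathcal{S}\}$ and $\Gamma_{g \cdot \mathcal{S}} = \{h \in G_T \mid h \cdot (g \cdot \mathcal{S}) = g \cdot \mathcal{S}\}$. Note that since $g \in G_T$, the set $g \cdot \mathcal{S}$ is again a rank decomposition of $T$, so the latter is well-defined as a subgroup of $G_T$.

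For the inclusion $g \Gamma_{\mathcal{S}} g^{-1} \subseteq \Gamma_{g \cdot \mathcal{S}}$, I would take $h' \in \Gamma_{\mathcal{S}}$ and compute $(g h' g^{-1}) \cdot (g \cdot \mathcal{S}) = g h' \cdot \mathcal{S} = g \cdot \mathcal{S}$, using associativity of the group action together with $h' \cdot \mathcal{S} = \mathcal{S}$. Since $G_T$ is a group and $g, h' \in G_T$, we also have $g h' g^{-1} \in G_T$, so $g h' g^{-1} \in \Gamma_{g \cdot \mathcal{S}}$.

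For the reverse inclusion, I would take $h \in \Gamma_{g \cdot \mathcal{S}}$ and set $h' = g^{-1} h g \in G_T$. Then $h' \cdot \mathcal{S} = g^{-1} h g \cdot \mathcal{S} = g^{-1} \cdot (h \cdot (g \cdot \mathcal{S})) = g^{-1} \cdot (g \cdot \mathcal{S}) = \mathcal{S}$, so $h' \in \Gamma_{\mathcal{S}}$ and $h = g h' g^{-1} \in g \Gamma_{\mathcal{S}} g^{-1}$. There is no real obstacle here; the only thing to be a bit careful about is that the action of $G_T \subset (GL_N^{\times k}) \ltimes \mathfrak{S}_k$ on the set of rank-one tensors is genuinely a group action (rank one is preserved by each factor, and the $\mathfrak{S}_k$ part permutes tensor slots but still sends rank-one tensors to rank-one tensors), which legitimizes the manipulations above.
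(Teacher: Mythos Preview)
Your proof is correct and follows essentially the same approach as the paper's: verify one inclusion directly by computing $(ghg^{-1})(gt_j)=g(ht_j)\in g\cdot\mathcal{S}$, and obtain the reverse inclusion by symmetry (the paper simply notes the construction is symmetric, while you spell out the computation with $g^{-1}hg$). If anything, your write-up is a bit more careful than the paper's one-line argument.
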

\begin{proof} Let $h\in  \G_{\cS }$, then $ ghg\inv (gt_j)=g(ht_j)\in g\cdot \cS $ so $\G_{g\cdot \cS }\subseteq  g\G_{\cS_t} g\inv$,
but the construction is symmetric in $\G_{g\cdot \cS }$ and $ \G_{\cS }  $.
\end{proof}

Similarly for a polynomial $P\in S^d\BC^N$ and a Waring decomposition $P=\ell_1^d+\cdots + \ell_r^d$ for some $\ell_j\in \BC^N$,
and $g\in G_P\subset GL_N$, the same result holds where $\cS=\{ \ell_1\hd \ell_r\}$. 

In summary,  algorithms come in $\tdim (G_T)$-dimensional families, and each member of the family has the same abstract symmetry group.

We recall the following theorem of de Groote:

\begin{theorem}\label{degrootethm}  \cite{MR0506377} The set of rank seven decompositions of $\Mtwo$ is
the orbit $G_{\Mtwo}\cdot \cS tr $.
\end{theorem}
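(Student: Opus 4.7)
The plan is to show that after applying a suitable element of $G_{\Mtwo}$, any rank seven decomposition of $\Mtwo$ coincides with $\cS tr$. I would organize the argument around the \emph{rank profile} of a decomposition, i.e.\ the ordinary matrix ranks of the factors $a_j\in A=U^*\ot V$, $b_j\in B=V^*\ot W$, $c_j\in C=W^*\ot U$ viewed as $2\times 2$ matrices. Strassen's decomposition has the rank profile $(2,1,1,1,1,1,1)$ in each slot (the ``identity'' summand \eqref{stra3} has rank two, and the six summands in \eqref{strb3} and \eqref{strc3} are all rank one). The first step is to prove that this is the only possibility: for any rank seven decomposition $\Mtwo=\sum_{j=1}^7 a_j\ot b_j\ot c_j$, exactly six of the $a_j$ have matrix rank one and one has matrix rank two, and similarly for the $b_j$ and $c_j$.

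To establish the rigid rank profile, I would use a substitution/lower bound argument together with the fact $\BR(\Mtwo)=7$. If too many $a_j$ were rank one, one could kill them by a single linear substitution on $U$ or $V$ and obtain a tensor of too small a rank; dually, if two $a_j$ were rank two, suitable row/column operations in $A$ reduce the decomposition and contradict the known rank lower bounds for the residual matrix multiplication problems. The same works by cyclic symmetry in the $B$ and $C$ slots. This is the main obstacle of the proof, because it is a global lower bound statement that must exclude very different-looking candidate decompositions.

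Granted the rank profile, I would then normalize using the continuous symmetry group $G_{\Mtwo}'\cong (GL(U)\times GL(V)\times GL(W))/\BC^*\times\BC^*$ from \S\ref{mnsymrev}. Using $GL(V)\times GL(U)$ we can transport the unique rank-two matrix among the $a_j$ to the identity $\Id_A=v_1u^1+v_2u^2$; the remaining freedom is a diagonal torus together with Weyl group elements. Then use the remaining $GL(W)$-freedom to put the unique rank-two matrix among the $c_j$ also into the identity form. At this stage the decomposition agrees with \eqref{stra3} on the distinguished summand, and the continuous symmetry has been exhausted up to a finite group acting on the remaining six rank-one triples.

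Finally, analyze the residual problem: six rank one triples $(a_j,b_j,c_j)$ with each $a_j,b_j,c_j$ of matrix rank one, summing to $\Mtwo-\Id_A\ot\Id_B\ot\Id_C$. Each rank-one $a_j$ is determined by a pair of points in $\BP U^*\times\BP V$, and likewise for $b_j,c_j$; the decomposition condition becomes a finite system of polynomial equations in these points. Using the $\BZ_3\times\BZ_2$ already visible in \eqref{stra3}--\eqref{strc3} to organize the six triples into two orbits of three, one checks directly that the only solutions are the $\BZ_3$-orbits described in \eqref{strb3} and \eqref{strc3}, up to the residual discrete symmetries, which are themselves elements of $G_{\Mtwo}$. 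This identifies the decomposition with $\cS tr$ up to an element of $G_{\Mtwo}$, completing the argument.
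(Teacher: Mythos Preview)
The paper does not give its own proof of this theorem; it is merely recalled as de Groote's result (the sentence immediately preceding the theorem reads ``We recall the following theorem of de Groote:'' and no argument follows). So there is nothing in the paper to compare your proposal against directly.

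That said, your outline is in the spirit of de Groote's original argument, but the two places where the real work lies are precisely the places you wave your hands. First, the rank-profile step: your justification (``kill rank-one $a_j$'s by a linear substitution'' / ``row-column operations on two rank-two $a_j$'s'') is not an argument. A linear substitution on $U$ does not in general annihilate several rank-one $a_j$'s simultaneously, and reducing two invertible $a_j$'s by row/column operations does not obviously produce a decomposition of a smaller matrix multiplication tensor whose rank you know. De Groote's actual proof of the profile is delicate and uses the bilinear-algorithm viewpoint (left/right kernels of the contractions $\Mtwo(A^*)$ etc.) rather than a one-line substitution trick. Second, your final step (``one checks directly that the only solutions are \eqref{strb3}--\eqref{strc3}'') hides a genuine case analysis: after normalizing the rank-two summand you must still pin down six rank-one triples in $\BP U^*\times\BP V$, $\BP V^*\times\BP W$, $\BP W^*\times\BP U$, and there is no a priori $\BZ_3$-symmetry of an \emph{arbitrary} decomposition to ``organize the six triples into two orbits of three''---that symmetry is something you are trying to \emph{prove}, not assume. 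If you want to turn this into a real proof you will need to supply both of these missing pieces; the normalization step in the middle is fine.
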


\section{Symmetries of $\Mn$}\label{mnsymrev}
We review the symmetry group of the matrix multiplication tensor
$$
G_{\Mn}:=\{ g\in GL_{n^2}^{\times 3}\times \FS_3\mid g\cdot \Mn=\Mn\}.
$$

One may  also consider matrix multiplication as a polynomial that happens to be multi-linear,
$\Mn\in S^3(A\op B\op C)$, and consider
$$
 \tilde G_{\Mn}:=\{ g\in GL(A\op B\op C) \mid g\cdot \Mn=\Mn\}.
$$

Note that $(GL(A)\times GL(B)\times GL(C)) \times \FS_3\subset GL(A\op B\op C)$, 
so $G_{\Mn}\subseteq \tilde G_{\Mn}$.

It is clear that $PGL_{\nnn}\times PGL_{\nnn}\times PGL_{\nnn}\times \BZ_3\subset G_{\Mn}$, the $\BZ_3$
because $\trace(XYZ)=\ttrace(YZX)$,   and the $PGL_\nnn$'s appear instead of $GL_\nnn$ because if
  we rescale by $\l \Id_U$, then $U^*$ scales by $\frac 1\l$ and there is no effect on
the decomposition.
Moreover since $\ttrace(XYZ)=\ttrace(Y^TX^TZ^T)$, we have
$PGL_n^{\times 3}\ltimes D_3\subseteq G_{\Mn}$,  
where the dihedral group $D_3$ is isomorphic to $\FS_3$,  but we denote it by $D_3$ to avoid confusion
 with a second copy of $\FS_3$ that will appear. We emphasize that this $\BZ_2$ is not contained
in either the $\FS_3$ permuting the factors  or the $PGL(A)\times PGL(B)\times PGL(C)$ acting on them.
In $\tilde G_{\Mn}$ we can also rescale the three factors by non-zero complex numbers $\l,\mu,\nu$ such
that $\l\mu\nu=1$, so we have 
$(\BC^*)^{\times 2}\times PGL_n^{\times 3}\ltimes D_3\subseteq\tilde G_{\Mn}$,

We will be primarily interested in $G_{\Mn}$. The first equality in the  following proposition appeared
in \cite[Thms. 3.3,3.4]{MR0506377} and \cite[Prop. 4.7]{DBLP:journals/corr/Burichenko15}
with   ad-hoc proofs. The second assertion appeared in \cite{MR3513064}.
We reproduce the proof from \cite{MR3513064}, as it is a special case of the result there.

\begin{proposition} $G_{\Mn}=  PGL_n^{\times 3}\ltimes D_3 $
and $\tilde G_{\Mn}=(\BC^*)^{\times 2}\times PGL_n^{\times 3}\ltimes D_3 $.
\end{proposition}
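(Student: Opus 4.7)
The plan is to prove the reverse containments in both equalities — the forward containments $\supseteq$ having just been assembled in the discussion preceding the proposition — with the first equality $G_{\Mn}=PGL_\nnn^{\times 3}\ltimes D_3$ following from the second by intersecting with the block-diagonal subgroup $GL_{n^2}^{\times 3}\ltimes\FS_3\subset GL(A\oplus B\oplus C)$, which kills the non-block-diagonal two-parameter scalar torus $(\BC^*)^{\times 2}$.

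The heart of the argument is a Lie-algebra computation. I would write an arbitrary $X\in\operatorname{Lie}\tilde G_{\Mn}\subseteq\mathfrak{gl}(A\oplus B\oplus C)$ as a $3\times 3$ block matrix $(X_{ij})$ with respect to the summands and impose the linearized invariance $X\cdot\Mn=0$, viewing $\Mn$ as a cubic polynomial of tri-degree $(1,1,1)$ on $A^*\oplus B^*\oplus C^*$. The key observation is that each of the six off-diagonal blocks shifts tri-degree into a \emph{distinct} new multi-degree (e.g.\ $X_{AB}$ into $(0,2,1)$, $X_{BA}$ into $(2,0,1)$, etc.), so $X\cdot\Mn=0$ decouples into six independent constraints $X_{ij}\cdot\Mn=0$ together with a diagonal constraint. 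Conciseness of $\Mn$ in each factor — immediate from the form $\Mn=\Id_U\ot\Id_V\ot\Id_W$, under which each flattening is an isomorphism after re-ordering — then forces every off-diagonal block to vanish. The Lie algebra thus reduces to triples $(X_A,X_B,X_C)\in\mathfrak{gl}(A)\oplus\mathfrak{gl}(B)\oplus\mathfrak{gl}(C)$ annihilating $\Mn$. To pin down this diagonal piece, I would exploit that matrix multiplication $\mu:A\otimes B\to C^*$ sends decomposable pairs to decomposable elements, forcing each $X_\bullet$ to preserve the Segre variety of rank-one elements in its factor; the compatibility condition matching these three Segre-preserving actions then collapses them to a single triple $(\xi,\eta,\zeta)\in\mathfrak{gl}(U)\oplus\mathfrak{gl}(V)\oplus\mathfrak{gl}(W)$, modulo the two-dimensional torus of joint rescalings.

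For the component group $\tilde G_{\Mn}/\tilde G_{\Mn}^0$ I would use that any $g\in\tilde G_{\Mn}$ acts by conjugation on the three simple factors of $\operatorname{Lie}\tilde G_{\Mn}$ (three copies of $\mathfrak{sl}_\nnn$), each of which is characterized by which single summand of $A\oplus B\oplus C$ it acts trivially on; consequently $g$ permutes $A,B,C$ as a set, yielding at most an $\FS_3$ quotient. Both the cyclic permutation $\ttrace(XYZ)=\ttrace(YZX)$ and the transpose symmetry $\ttrace(XYZ)=\ttrace(Z^TY^TX^T)$ realize generators of $\FS_3\cong D_3$, so the component group is exactly $D_3$. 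The main obstacle is making the conciseness step rigorous — explicitly solving the equations $X_{ij}\cdot\Mn=0$ for each off-diagonal block — but this is a direct verification because the $\Mn$-coefficients are the indicator of a cyclic index matching, and each component of the block equation pins down an entry of $X_{ij}$; the Segre-preservation and matching steps that follow are standard.
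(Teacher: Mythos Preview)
Your outline is broadly sound and would lead to a correct proof, but it differs from the paper's argument and carries one small error and one genuine gap.

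\textbf{The minor error.} The two-parameter torus $(\BC^*)^{\times 2}$ you want to ``kill'' is \emph{not} non-block-diagonal: it is the set of triples $(\lambda\,\Id_A,\mu\,\Id_B,\nu\,\Id_C)$ with $\lambda\mu\nu=1$, which sits squarely inside $GL(A)\times GL(B)\times GL(C)$. The reason it is invisible in the first equality is the one the paper gives: it acts trivially on $A\ot B\ot C$, so as a subgroup of transformations of the tensor it collapses to the identity. Your reduction of the first equality to the second is fine once you replace your stated reason with this one.

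\textbf{The gap.} Your component-group argument shows that any $g\in\tilde G_{\Mn}$ permutes the summands $A,B,C$, giving a homomorphism $\tilde G_{\Mn}\to\FS_3$. To conclude that $\tilde G_{\Mn}/\tilde G_{\Mn}^0$ is at most $\FS_3$ you still need that the kernel of this map (the block-diagonal elements of $\tilde G_{\Mn}$) equals $\tilde G_{\Mn}^0$. Your Lie-algebra computation only gives the identity component; a block-diagonal $g_A$ normalizing the Segre in $\BP A$ could a priori lie in $(PGL(U)\times PGL(V))\ltimes\BZ_2$, with the extra $\BZ_2$ swapping the two rulings, and there are analogous potential $\BZ_2$'s in $B$ and $C$. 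You must rule these out via the compatibility across the three factors---this is doable, and your Segre-preservation argument does upgrade to the group level, but it is not automatic and you should say so.

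\textbf{Comparison with the paper.} The paper avoids this bookkeeping by quoting the method of \cite{MR3217699}: once the identity component $\tilde G_{\Mn}^0$ is known, the full stabilizer lies in its normalizer, and $N(\tilde G_{\Mn}^0)/\tilde G_{\Mn}^0$ is the automorphism group of the \emph{marked} Dynkin diagram of the representation $A\oplus B\oplus C$. The markings (highest weights) encode exactly the representation-theoretic information that rules out the spurious $\BZ_2$'s above, so one reads off $D_3$ directly from the picture. Your approach is more elementary and self-contained; the paper's is shorter and more conceptual, trading the explicit Segre/compatibility check for an appeal to a general structural result.
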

\begin{proof} It will be sufficient to show the second equality because the $(\BC^*)^{\times 2}$ acts trivially on $A\ot B\ot C$.
For polynomials, we use the method of \cite[Prop. 2.2]{MR3217699} adapted to  reducible representations.
A straight-forward Lie algebra calculation shows 
the connected component of the identity of $\tilde G_{\Mn}$ is $\tilde G_{\Mn}^0=(\BC^*)^{\times 2}\times PGL_n^{\times 3}$.
As was observed in \cite{MR3217699} the full stabilizer group must be contained in its normalizer $N(\tilde G_{\Mn}^0)$.
But the normalizer is the automorphism group of the marked Dynkin diagram for
$A\op B\op C$, which in our case is

\begin{center}
\begin{tikzpicture}[scale = 0.5]
\fill (0+-1,0) circle[radius=4pt];
\fill (0+0,0) circle[radius=4pt];
\fill (0+1-0.2,0) circle[radius=1pt];
\fill (0+1,0) circle[radius=1pt];
\fill (0+1+0.2,0) circle[radius=1pt];
\fill (0+2,0) circle[radius=4pt];
\fill (0+3,0) circle[radius=4pt];
\fill (0+4,0) circle[radius=4pt];
\fill (0+5,0) circle[radius=4pt];
\draw (0+-1,0) -- (0+0.4,0);
\draw (0+2-0.4,0) -- (0+5,0);
\fill (0+-1,1.5) circle[radius=4pt];
\fill (0+0,1.5) circle[radius=4pt];
\fill (0+1-0.2,1.5) circle[radius=1pt];
\fill (0+1,1.5) circle[radius=1pt];
\fill (0+1+0.2,1.5) circle[radius=1pt];
\fill (0+2,1.5) circle[radius=4pt];
\fill (0+3,1.5) circle[radius=4pt];
\fill (0+4,1.5) circle[radius=4pt];
\fill (0+5,1.5) circle[radius=4pt];
\draw (0+-1,1.5) -- (0+0.4,1.5);
\draw (0+2-0.4,1.5) -- (0+5,1.5);
\fill (0+-1,3) circle[radius=4pt];
\fill (0+0,3) circle[radius=4pt];
\fill (0+1-0.2,3) circle[radius=1pt];
\fill (0+1,3) circle[radius=1pt];
\fill (0+1+0.2,3) circle[radius=1pt];
\fill (0+2,3) circle[radius=4pt];
\fill (0+3,3) circle[radius=4pt];
\fill (0+4,3) circle[radius=4pt];
\fill (0+5,3) circle[radius=4pt];
\draw (0+-1,3) -- (0+0.4,3);
\draw (0+2-0.4,3) -- (0+5,3);
\draw (6,-1) -- (6,4);
\fill (8+-1,0) circle[radius=4pt];
\fill (8+0,0) circle[radius=4pt];
\fill (8+1-0.2,0) circle[radius=1pt];
\fill (8+1,0) circle[radius=1pt];
\fill (8+1+0.2,0) circle[radius=1pt];
\fill (8+2,0) circle[radius=4pt];
\fill (8+3,0) circle[radius=4pt];
\fill (8+4,0) circle[radius=4pt];
\fill (8+5,0) circle[radius=4pt];
\draw (8+-1,0) -- (8+0.4,0);
\draw (8+2-0.4,0) -- (8+5,0);
\fill (8+-1,1.5) circle[radius=4pt];
\fill (8+0,1.5) circle[radius=4pt];
\fill (8+1-0.2,1.5) circle[radius=1pt];
\fill (8+1,1.5) circle[radius=1pt];
\fill (8+1+0.2,1.5) circle[radius=1pt];
\fill (8+2,1.5) circle[radius=4pt];
\fill (8+3,1.5) circle[radius=4pt];
\fill (8+4,1.5) circle[radius=4pt];
\fill (8+5,1.5) circle[radius=4pt];
\draw (8+-1,1.5) -- (8+0.4,1.5);
\draw (8+2-0.4,1.5) -- (8+5,1.5);
\fill (8+-1,3) circle[radius=4pt];
\fill (8+0,3) circle[radius=4pt];
\fill (8+1-0.2,3) circle[radius=1pt];
\fill (8+1,3) circle[radius=1pt];
\fill (8+1+0.2,3) circle[radius=1pt];
\fill (8+2,3) circle[radius=4pt];
\fill (8+3,3) circle[radius=4pt];
\fill (8+4,3) circle[radius=4pt];
\fill (8+5,3) circle[radius=4pt];
\draw (8+-1,3) -- (8+0.4,3);
\draw (8+2-0.4,3) -- (8+5,3);
\draw (14,-1) -- (14,4);
\fill (16+-1,0) circle[radius=4pt];
\fill (16+0,0) circle[radius=4pt];
\fill (16+1-0.2,0) circle[radius=1pt];
\fill (16+1,0) circle[radius=1pt];
\fill (16+1+0.2,0) circle[radius=1pt];
\fill (16+2,0) circle[radius=4pt];
\fill (16+3,0) circle[radius=4pt];
\fill (16+4,0) circle[radius=4pt];
\fill (16+5,0) circle[radius=4pt];
\draw (16+-1,0) -- (16+0.4,0);
\draw (16+2-0.4,0) -- (16+5,0);
\fill (16+-1,1.5) circle[radius=4pt];
\fill (16+0,1.5) circle[radius=4pt];
\fill (16+1-0.2,1.5) circle[radius=1pt];
\fill (16+1,1.5) circle[radius=1pt];
\fill (16+1+0.2,1.5) circle[radius=1pt];
\fill (16+2,1.5) circle[radius=4pt];
\fill (16+3,1.5) circle[radius=4pt];
\fill (16+4,1.5) circle[radius=4pt];
\fill (16+5,1.5) circle[radius=4pt];
\draw (16+-1,1.5) -- (16+0.4,1.5);
\draw (16+2-0.4,1.5) -- (16+5,1.5);
\fill (16+-1,3) circle[radius=4pt];
\fill (16+0,3) circle[radius=4pt];
\fill (16+1-0.2,3) circle[radius=1pt];
\fill (16+1,3) circle[radius=1pt];
\fill (16+1+0.2,3) circle[radius=1pt];
\fill (16+2,3) circle[radius=4pt];
\fill (16+3,3) circle[radius=4pt];
\fill (16+4,3) circle[radius=4pt];
\fill (16+5,3) circle[radius=4pt];
\draw (16+-1,3) -- (16+0.4,3);
\draw (16+2-0.4,3) -- (16+5,3);
\node at (-1,1.5+0.35) {\tiny 1};
\node at (5,3+0.35) {\tiny 1};
\node at (7,0+0.35) {\tiny 1};
\node at (13,1.5+0.35) {\tiny 1};
\node at (15,3+0.35) {\tiny 1};
\node at (21,0+0.35) {\tiny 1};
\end{tikzpicture}
\end{center}

There are three triples of marked diagrams. Call each column 
consisting of 3 marked diagrams a group.  The automorphism group of the picture is $D_3=\BZ_2\ltimes \BZ_3$,
where the $\BZ_2$ may be seen as  flipping each diagram, exchanging the first and third diagram in each group, and exchanging
the first and second group. The $\BZ_3$ comes from cyclically permuting each group and the diagrams within each group.
\end{proof}

Regarding the symmetries discussed in \S\ref{strvo},    the $\BZ_3$ is in 
the $\FS_3$ in $PGL_2^{\times 3}\times \FS_3$  and the  internal  $\BZ_2$ is in $\G_{\cS tr}'\subset PGL_2^{\times 3}$.

Thus if  $\cS $ is (the set of points of)  a  rank decomposition of $\Mn$, then
$\G_{\cS}\subset [(GL(U)\times GL(V)\times GL(W))\ltimes \BZ_3]\ltimes \BZ_2$.

We call a $\BZ_3\subset \G_{\cS}$ a {\it standard cyclic symmetry} if it corresponds to
  $(\Id,\Id,\Id)\cdot \BZ_3\subset (GL(U)\times GL(V)\times GL(W))\ltimes \BZ_3$.
  
We call a $\BZ_2\subset \G_{\cS}$ a {\it convenient transpose symmetry} if
it corresponds to  the symmetry
of $\Mn$ given by $a\ot b\ot c\mapsto a^T\ot c^T\ot b^T$.
The convenient transpose symmetry lies in
$(GL(A)\times GL(B)\times GL(C))\times \FS_2\subset
(GL(A)\times GL(B)\times GL(C))\times \FS_3$, where the  component of the transpose in $\FS_2$ switches the
last two factors and the component in $GL(A)\times GL(B)\times GL(C)$ sends each matrix to its transpose. 

\begin{remark}Since $\Mn\in (U^*\ot U)^{\ot 3}$ one could consider the larger
symmetry group considering $\Mn\in U^{\ot 3}\ot U^{*\ot 3}$ as is done in \cite{DBLP:journals/corr/Burichenko14}.
\end{remark}

 \section{The Strassen family}\label{strfam}

Since $PGL_2^{\times 3}\subset G_{\Mtwo}$,   we can replace
$u_1,u_2$ by any basis of $U$ in Strassen's decomposition, and similarly for $v_1,v_2$ and $w_1,w_2$.
In particular, 
we need not have $u_1=v_1$ etc... When we do that, the symmetries become conjugated by our change of basis
matrices. If we only use elements of the diagonal $PGL_2$ in  $PGL_2^{\times 3}$, the $\BZ_3$-symmetry remains
standard. More  subtly, the  $\BZ_3$-symmetry remains the standard
cyclic permutation of factors if we apply
  elements of $\BZ_3$ in any of the $PGL_2$'s, i.e., setting $\o=e^{\frac{2\pi i}{3}}$, 
  we can apply any of
$$
\rho(\o)= \begin{pmatrix} 0&-1\\1&-1\end{pmatrix} \ {\rm and} \ 
\rho(\o^2)= \begin{pmatrix} -1& 1\\-1&0\end{pmatrix} 
$$
to $U,V$ or $W$. 

For example, if we apply the change of basis matrices
$$
g_U=\begin{pmatrix}1&-1\\ 0& 1\end{pmatrix}\in  GL(U), \ 
g_V=\begin{pmatrix}-1&0\\ -1&1\end{pmatrix}\in  GL(V), \ 
g_W=\begin{pmatrix}0&1\\ 1&0\end{pmatrix}\in  GL(W),
$$
and take the image vectors as our new basis vectors, 
then setting $u_3=-(u_1+u_2)$ and $u^3=u^1-u^2$ and similarly for  the $v$'s and $w$'s,  the decomposition becomes:
\begin{align}
\label{strax}\Mtwo=&-( v_1u^2+v_2u^3)\ot ( w_1v^2+w_2v^3)\ot ( u_1w^2+u_2w^3) 
\\
\label{strbx}&+v_1u^1\ot w_1v^1\ot u_1w^1\\
&  \label{strbx2}
+v_3u^2 \ot w_3v^2 \ot u_3w^2   \\
&\label{strbx3} +v_2u^3 \ot w_2v^3 \ot u_2w^3 \\
\label{strcx}&- \langle v_1u^2\ot w_2v^1\ot u_3w^3\rangle_{\BZ_3}.
\end{align}

\begin{remark} The matrices in \eqref{strcx} are all nilpotent, and none of the other matrices
appearing in this decomposition are.
\end{remark}

Notice that for the first term 
$ v_1u^2+v_2u^3=   v_2u^1+v_3 u^1=v_3u^2+v_2u^1$.
Here there is a standard $\BZ_3\subset \FS_3$. There are four fixed points for this
standard $\BZ_3$: \eqref{strax},\eqref{strbx}\eqref{strbx2},\eqref{strbx3}. (In any element of the Strassen family there will be some $\BZ_3$ with
four fixed points, but the $\BZ_3$ need not be standard.)
There is also a standard $\BZ_3\subset PGL_2^{\times 3}$ embedded diagonally, that sends
$u_1\ra u_2\ra u_3$, and acting by the inverse matrix on the dual basis, and similarly for  the $v$'s and $w$'s. Under this action \eqref{strax} is fixed and
we have the cyclic permutation \eqref{strbx}$\ra$\eqref{strbx3}$\ra$\eqref{strbx2}.

If we take the standard vectors of  \eqref{stdvects} in each factor  we get
$$\Mtwo=
\begin{pmatrix} 0 &-1\\ 1&-1\end{pmatrix}^{\ot 3}
+ 
\begin{pmatrix} 1 &0\\ 0&0\end{pmatrix}^{\ot 3}
+ 
\begin{pmatrix} 0 &1\\ 0&1\end{pmatrix}^{\ot 3}
+ 
\begin{pmatrix} 0 &0\\ -1&1\end{pmatrix}^{\ot 3}
+
\langle   
\begin{pmatrix} 0 &1\\ 0&0\end{pmatrix} \ot   
\begin{pmatrix} 0 &0\\ 1&0\end{pmatrix}\ot  
\begin{pmatrix}  1 &-1\\ 1&-1\end{pmatrix}\rangle_{\BZ_3}
$$

If we want to see the $\BZ_3\subset PGL_2^{\times 3}$ more transparently, it is better to diagonalize
the $\BZ_3$ action so the first matrix becomes
$$
a=\begin{pmatrix} \o &0\\ 0 & \o^2\end{pmatrix}.
$$
where $\o:=\exp(\frac{-2\pi i}{3})$. Then for $\iota := i/\sqrt3$, $\sigma := \exp(\frac{2\pi i}{12})/\sqrt3$
we get
$$\Mtwo=
a^{\otimes 3}
+
b^{\otimes 3}
+
(\varrho(b))^{\otimes 3}
+
(\varrho^2(b))^{\otimes 3}
+
\langle
c
\otimes
\varrho(c)
\otimes
\varrho^2(c)
\rangle_{\BZ_3},
$$
where 
\[
b := \begin{pmatrix}
 \sigma & \bar\iota \\
 \iota & \bar\sigma
\end{pmatrix}, \quad
c := \begin{pmatrix}
 \iota & \iota \\
 \bar\iota & \bar\iota
\end{pmatrix}, \quad
\varrho : \BC^{2 \times 2} \to \BC^{2 \times 2}, \ \varrho(X) = a X a^{-1}.
\]
Note that $a+b+c=0$.

\section{Projective perspective}\label{projper}
Although the above description of the Strassen family of decompositions for $\Mtwo$  is satisfying, 
it becomes even more transparent with a projective perspective.

\subsection{$\Mtwo$ viewed projectively}\label{m2subsect}
Recall that $PGL_2$ acts simply transitively on the set of triples of distinct points of $\pp 1$.
So to fix a decomposition in the family, we select a triple of points in each space. We focus on $\BP U$.
Call the points $[u_1],[u_2],[u_3]$.
Then these determine three points in $\BP U^*$, $[u^{1\perp}],[u^{2\perp}],[u^{3\perp}]$.
 We choose representatives $u_1,u_2,u_3$
satisfying $u_1+u_2+u_3=0$.  We could have taken any linear relation, it just would introduce
coefficients in the decomposition. We take the most symmetric relation to keep all three points on
an equal footing. Similarly, we fix the scales on the $u^{j\perp}$ by requiring
$u^{j\perp}(u_{j-1})=1$ and $u^{j\perp}(u_{j+1})=-1$, where indices are considered mod $\BZ_3$, so   
$u_{3+1}=u_1$ and $u_{1-1}=u_3$.

In comparison with what we had before, letting the old indices be hatted, we have
$\hat u_1=u_1$, $\hat u_2=u_2$, $\hat u_3=-u_3$ and 
$\hat u^1=u^{2\perp}$, $\hat u^2=-u^{1\perp}$, and $\hat u^3=-u^{3\perp}$.
The effect is to make the symmetries of the decomposition more transparent.
Our identifications of the ordered triples $\{ u_1,u_2,u_3\}$ and $\{ v_1,v_2,v_3\}$ exactly
determine a linear isomorphism $a_0: U\ra V$, and similarly for the other pairs of vector
spaces. Note that $a_0=v_j\ot u^{j+1\perp} + v_{j+1}\ot u^{j+2\perp}$ for any $j=1,2,3$. 
 
Then
\begin{align}
\label{projm2}\Mtwo&=a_0\ot b_0\ot c_0\\
\nonumber &+\langle ( v_2\ot u^{1\perp})\ot ( w_1\ot v^{3\perp})\ot ( u_3\ot w^{2\perp})\rangle_{\BZ_3}\\
\nonumber &  + \langle ( v_3\ot u^{1\perp})\ot ( w_1\ot v^{2\perp})\ot ( u_2\ot w^{3\perp})\rangle_{\BZ_3}.
\end{align}
With this presentation, the $\FS_3\subset PGL_2\subset PGL_2^{\times 3}$ acting by permuting the indices
transparently preserves the decomposition, with two orbits, the fixed point $a_0\ot b_0\ot c_0$
and the orbit of $( v_2\ot u^{1\perp})\ot ( w_1\ot v^{3\perp})\ot ( u_3\ot w^{2\perp})$.

\begin{remark} Note that here there are no nilpotent matrices appearing. 
\end{remark}

\begin{remark}
The geometric picture of the decomposition of $\Mtwo$ can be rephrased
as follows. Consider the space of linear isomorphisms $U\to V$ (mod
scalar multiplication) as the
projective space $\mathbb P^3$ of $2\times 2$ matrices, in which we fix
coordinates, coming from the choice of basis
for $U,V$. The choice of basis also determines an identification between
$U$ and $V$. Then $a_0$ represents in $\mathbb
P^3$ a point of rank $2$, which can be taken as the identity in the
choice of coordinates. The other $6$ points
$Q_i=u_i\otimes u^{j\perp}$ appearing in the first factor of the
decomposition can be determined as follows. The
points $P_i=u_i\otimes u^{i\perp}$ (in the identification) represent the
choice of $3$ points in the conic obtained
by cutting with a plane (e.g. the plane of traceless matrices) the
quadric  $q=Seg(\pp 1\times \pp 1)$  of matrices of rank $1$. Through
each $P_i$ one finds  lines of the two rulings of $q$, call then
$L_i,M_i$. Then the six points $Q_i$ are given
by:
$$ Q_1=L_1\cap M_2, \ Q_2=L_2\cap M_3, \ Q_3=L_3\cap M_1$$
$$ Q_4=M_1\cap L_2,\ Q_5=M_2\cap L_3,\ Q_6=M_3\cap L_1.$$
An analogue of the construction determines the seven points in the other
two factors of the tensor product, so that the $7$ final summands can be
determined
combinatorially and the $\mathbb Z_2, \mathbb Z_3$ symmetries can be
easily recognized.

The geometric construction can be generalized to higher dimensional
spaces, so it could insight  for extensions to larger matrix multiplication
tensors. The difficult part is to determine 
how one should combine the points
constructed in each factor of the tensor product, in order to produce a
decomposition of $M_{\langle \nnn\rangle}$.
\end{remark}

When we view \eqref{strax} projectively, we get

 \begin{align}
\label{straxx}\Mtwo=& ( v_1u^{1\perp}+v_2u^{3\perp})\ot ( w_1v^{1\perp}+w_2v^{3\perp})\ot ( u_1w^{1\perp}+u_2w^{3\perp}) 
\\
\label{strbxx}&+v_1u^{2\perp}\ot w_1v^{2\perp}\ot u_1w^{2\perp}\\
&  \label{strbxx2}
+v_3u^{1\perp} \ot w_3v^{1\perp} \ot u_3w^{1\perp}   \\
&\label{strbxx3} +v_2u^{3\perp} \ot w_2v^{3\perp} \ot u_2w^{3\perp} \\
\label{strcxx}&  \langle v_1u^{1\perp}\ot w_2v^{2\perp}\ot u_3w^{3\perp}\rangle_{\BZ_3}.
\end{align}

With this presentation,   $\FS_3\subset \G_{\cS}'$ is again transparent.

\subsection{Symmetries of $\G_{\cS tr}$}
Let $\Mn=\sum_{j=1}^r t_j$ be a rank decomposition for $\Mn$ and write $t_j=a_j\ot b_j\ot c_j$.
Let $\bold r_j =(r_{A,j} ,r_{B,j},r_{C,j}):=(\trank(a_j), \trank(b_j),\trank(c_j))$, and
let $\tilde {\bold r}_j$ denote the unordered triple.
The following   proposition is  clear:

\begin{proposition}
Let $\cS$ be a rank decomposition of $\Mn$. Partition $\cS$ by un-ordered rank triples into
disjoint subsets: $\cS:=\{ \cS_{1,1,1}, \cS_{1,1,2}\hd \cS_{n,n,n}\}$.
Then 
$\G_{\cS}'$ preserves each $\cS_{s,t,u}$.  
\end{proposition}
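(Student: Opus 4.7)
The plan is to combine the identification $G_{\Mn} = PGL_\nnn^{\times 3} \ltimes D_3$ from the previous proposition with the observation that the $PGL_\nnn^{\times 3}$ action preserves the matrix rank of each tensor factor. Since $\G_{\cS}' = \G_{\cS} \cap (GL(A)\times GL(B)\times GL(C))$ is by definition contained in the $\FS_3$-trivial part of $G_{\Mn}$, one immediately has $\G_{\cS}' \subset PGL(U)\times PGL(V)\times PGL(W)$.

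Next I would unravel how $PGL(U)\times PGL(V)\times PGL(W)$ acts on the factors $A=U^*\otimes V$, $B=V^*\otimes W$, $C=W^*\otimes U$: an element $(g_U,g_V,g_W)$ acts on a matrix $a\in U^*\otimes V$ by $a\mapsto g_V\,a\,g_U^{-1}$, and analogously on $B$ and $C$. Left and right multiplication by invertible matrices preserves the rank of a matrix, so for every rank-one summand $t_j=a_j\otimes b_j\otimes c_j$ of $\cS$ and every $g\in \G_{\cS}'$, the ordered rank triple $\bold r_j=(\trank(a_j),\trank(b_j),\trank(c_j))$ is unchanged after applying $g$, and hence so is the unordered triple $\tilde{\bold r}_j$.

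Since $g\cdot \cS=\cS$ by definition of $\G_{\cS}'$, the action of $g$ permutes the rank-one elements of $\cS$. Combined with the rank-preservation just noted, this permutation must map each rank-stratum $\cS_{s,t,u}$ into itself, which is exactly the claim. There is essentially no obstacle here: the only content is matching up the $PGL_\nnn^{\times 3}$ action with left/right matrix multiplication on each factor, which is why the authors flag the statement as clear.
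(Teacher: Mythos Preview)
Your argument is correct and is precisely the reasoning the authors have in mind: the paper does not give a proof at all, stating only that the proposition ``is clear.'' Your unpacking---that $\G_{\cS}'\subset PGL(U)\times PGL(V)\times PGL(W)$ acts on each factor by left/right multiplication by invertible matrices, hence preserves (even ordered) rank triples---is exactly why it is clear.
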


We can say more about rank one elements:

If $a\in U^*\ot V$ and  $\trank(a)=1$, then there are unique points $[\mu]\in \BP U^*$ and
$[v]\in \BP V$ such that $[a]=[\mu\ot v]$.

Now   given a decomposition  $\cS$ of $\Mn$,
define $\cS_{U^*}\subset \BP U^*$ and $\cS_U\subset \BP U$ to correspond to the
elements appearing in $\cS_{1,1,1}$. 
Then $\G_{\cS}'$ preserves $\cS_U$ and $\cS_{U^*}$.

In the case of Strassen's decomposition $\cS tr_U$ is a configuration of three points in $\pp 1$,
so {\it a priori} we must have $\G_{\cS tr}'\cap PGL(U)\subset \FS_3$. If we insist on
the standard $\BZ_3$-symmetry (i.e., restrict to the subfamily of decompositions where there is a 
standard cyclic
symmetry), there is just one $PGL_2$ and we have $\G_{\cS tr}'\subseteq \FS_3$. Recall that this
is no loss of generality as the full symmetry group is the same for all decompositions 
in the family.
We conclude $\G_{\cS tr}\subseteq \FS_3\times D_3$. 
We have already seen $\FS_3\times \BZ_3\subset \G_{\cS tr}$, Burichenko \cite{DBLP:journals/corr/Burichenko14} shows
that in addition there is a non-convenient $\BZ_2$ obtained by taking the convenient $\BZ_2$
(which sends the decomposition to another decomposition in the family) and then conjugating
by  $\begin{pmatrix}  0 & -1\\ 1 & 0\end{pmatrix}\subset PGL_2\subset PGL_2^{\times 3}$ which sends the
decomposition back to $\cS tr$. We recover (with a new proof of the upper bound) Burichenko's theorem:

\begin{theorem} \cite{DBLP:journals/corr/Burichenko14} The symmetry group of
Strassen's decomposition of $\Mtwo$ is
$\FS_3\times D_3\subset PGL_2^{\times 3}\times D_3=G_{\Mtwo}$.
\end{theorem}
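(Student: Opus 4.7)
The plan is to establish both containments in $\G_{\cS tr} = \FS_3 \times D_3$. By Proposition \ref{conjprop} and Theorem \ref{degrootethm}, the abstract group $\G_{\cS tr}$ is the same for any representative in the de Groote orbit, so I may work with the projective representative \eqref{projm2}, in which both the standard cyclic $\BZ_3$ permuting tensor factors and an $\FS_3$ acting on the indices of the $u_i, v_i, w_i$ are manifest.

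For the upper bound, I split any $g \in \G_{\cS tr}$ along $G_{\Mtwo} = PGL_2^{\times 3} \ltimes D_3$; the $D_3$ piece is absorbed into the right-hand side of the desired containment, reducing the problem to bounding $\G_{\cS tr}' = \G_{\cS tr} \cap PGL_2^{\times 3}$. By the rank-stratification proposition preceding the theorem, any $h = (h_U, h_V, h_W) \in \G_{\cS tr}'$ preserves the three-point configuration $\cS tr_U = \{[u_1], [u_2], [u_3]\} \subset \BP U$, and analogously the configurations in $\BP V$ and $\BP W$; since the $PGL_2$-stabilizer of three distinct points of $\BP^1$ is exactly $\FS_3$, each coordinate $h_\star$ lies in a copy of $\FS_3$. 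The standard cyclic $\BZ_3$ of the chosen representative identifies the three factors coherently, and conjugating $h$ by this $\BZ_3$ produces $(h_V, h_W, h_U) \in \G_{\cS tr}'$; compatibility collapses the three independent $\FS_3$'s onto a single diagonal copy. Hence $\G_{\cS tr}' \subseteq \FS_3$ and $\G_{\cS tr} \subseteq \FS_3 \times D_3$.

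For the lower bound, the diagonal $\FS_3$ acting by permuting the indices preserves \eqref{projm2}: the summand $a_0 \otimes b_0 \otimes c_0$ is $\FS_3$-fixed by construction, while each of the two remaining $\BZ_3$-orbits of three summands is permuted within itself. The $D_3$ is generated by the cyclic $\BZ_3$ of factor permutations (transparent from the bracket notation $\langle \cdot \rangle_{\BZ_3}$) together with Burichenko's non-convenient $\BZ_2$, which is the composition of the convenient transpose $a \otimes b \otimes c \mapsto a^T \otimes c^T \otimes b^T$ (sending $\cS tr$ to a neighbor in the family) with the diagonal $PGL_2$-element $\bigl(\begin{smallmatrix} 0 & -1 \\ 1 & 0 \end{smallmatrix}\bigr)$ (returning that neighbor to $\cS tr$). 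The two factors commute because $\FS_3$ lives inside the diagonal $PGL_2$ while $D_3$ acts on the indexing and transposition of the tensor factors, so together they yield the product $\FS_3 \times D_3 \subseteq \G_{\cS tr}$.

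The main obstacle is the diagonality reduction in the upper bound: the rank-preservation proposition alone yields only $\G_{\cS tr}' \subseteq \FS_3^{\times 3}$, and one must genuinely use the presence of the standard cyclic $\BZ_3$ (together with the coherent identification of the three factors in the chosen representative) to collapse this to a single diagonal $\FS_3$. Once that step is in place, verifying Burichenko's non-convenient $\BZ_2$ is a short explicit conjugation computation using the projective model of Section \ref{m2subsect}.
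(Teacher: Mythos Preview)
Your overall strategy matches the paper's exactly: bound $\G_{\cS tr}'$ via the rank stratification of the seven summands, pass from $\FS_3^{\times 3}$ to a single diagonal $\FS_3$ using the standard cyclic $\BZ_3$, and then cite the explicitly exhibited $\FS_3$, the cyclic $\BZ_3$, and Burichenko's non-convenient $\BZ_2$ for the lower bound. The paper's own write-up of the collapse step is just as terse (``there is just one $PGL_2$'').

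That said, your stated mechanism for the collapse does not actually work. Conjugating $h=(h_U,h_V,h_W)\in\G_{\cS tr}'$ by the standard cyclic generator only shows that $\G_{\cS tr}'$ is invariant under cyclically permuting its three $PGL_2$-coordinates; it does not force $h_U=h_V=h_W$ (for instance, all of $\FS_3^{\times 3}$ is invariant under such shifts, as is any subgroup of the form $\{(g_1,g_2,g_3):g_1g_2g_3\in N\}$ for $N\trianglelefteq\FS_3$). The clean way to close the gap---implicit in the paper's set-up of $a_0,b_0,c_0$ as the linear identifications $U\to V\to W\to U$---is to use the \emph{other} rank stratum: $\cS_{2,2,2}=\{a_0\otimes b_0\otimes c_0\}$ is a singleton, so every $h\in\G_{\cS tr}'$ fixes it, whence $h_V\,a_0\,h_U^{-1}$ is a scalar multiple of $a_0$, and similarly for $b_0,c_0$. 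Since $a_0,b_0,c_0$ are invertible, this forces $h_V=a_0h_Ua_0^{-1}$, $h_W=b_0h_Vb_0^{-1}$ in $PGL_2$, i.e.\ $h_U=h_V=h_W$ under the identifications, giving $\G_{\cS tr}'\subseteq\FS_3$ as claimed.
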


\section{How to prove Strassen's decomposition is actually matrix multiplication}\label{proveisstr}
The group $\G_{\cS tr}$ acts on $(U^*\ot U)^{\ot 3}$ (in different ways, depending on the choice of decomposition in the
family). Say we did not know $\cS tr$ but did know its symmetry group. Then we could look for it inside the space
of $\G_{\cS tr}$ invariant tensors. In future work we plan to take candidate symmetry groups for matrix multiplication
decompositions and look for decompositions with elements from these subspaces. In this paper we simply illustrate
the idea by going in the other direction: furnishing a proof that $\cS tr$ is a decomposition of $\Mtwo$, by
using the invariants to reduce the computation to a simple verification. We accomplish this  in \S\ref{invarten} below.
We first give yet another proof that Strassen's decomposition is matrix multiplication using the fact
that $\Mtwo$ is characterized by its symmetries.

\subsection{Proof that Strassen's algorithm works via characterization by  symmetries}
Here is  a   
proof that illustrates another potentially useful property of $\Mn$:  it  is {\it characterized by its symmetry group} \cite{MR3513064} 
Any $T\in (U^*\ot V)\ot (V^*\ot W)\ot (W^*\ot U)$ that is invariant under $PGL(U)\times PGL(V)\times PGL(W)\ltimes D_3$ is 
 up to scale to $\Mn$. 
 Any $T\in (U^*\ot V)\ot (V^*\ot W)\ot (W^*\ot U)$ that is invariant under
 a group isomorphic to $PGL(U)\times PGL(V)\times PGL(W)\ltimes D_3$ is 
  $GL(A)\times GL(B)\times GL(C)\times \FS_3$-equivalent  up to scale to $\Mn$. 

\begin{remark} $\Mn$ is also characterized as a polynomial by its symmetry group $\tilde G_{\Mn}$, and
any $T\in (U^*\ot V)\ot (V^*\ot W)\ot (W^*\ot U)$ that is invariant under $PGL(U)\times PGL(V)\times PGL(W)$ is 
 up to scale to $\Mn$.   However, it is not
characterized up to  $GL(A)\times GL(B)\times GL(C)$-equivalence by $G_{\Mn}'$ in the strong sense of up to isomorphism because 
$(X,Y,Z)\mapsto \ttrace(YXZ)$ has an isomorphic symmetry group but is not $GL(A)\times GL(B)\times GL(C)$-equivalent.
\end{remark}

By the above discussion, we  only need to check the right hand side of \eqref{projm2}
is invariant under $PGL(U)\times PGL(V)\times PGL(W)$ and to check its scale. But by symmetry, it is sufficient to check it
is invariant under $PGL(U)$. For this it is sufficient to check it is annihilated by $\fsl(U)$, and again
by symmetry, it is sufficient to check it is annihilated by $ u_1\ot u^{1\perp} $,
which is a simple calculation.

\subsection{Spaces of invariant tensors}\label{invarten}

As an $\FS_3$-module $ A= U^*\ot V=[21]\ot [21]=[3]\op [21]\op [1^3]$. 
In what follows we use the decompositions:
\begin{align*}
S^2[21]&=[3]\op [21]\\
\La 2[21]&=[1^3]\\
S^3[21]&=[3]\op[21]\op[1^3].
\end{align*}

The space of standard cyclic $\BZ_3$-invariant tensors in $A^{\ot 3}=S^3A\op S_{21}A^{\op 2}\op \La 3 A$ is $S^3A\op \La 3 A$.
Inside the space
 of  $\BZ_3$-invariant vectors we want to find instances of the trivial $\FS_3$-module $[3]$  in 
$S^3([3]\op [2,1]\op [1^3])\op \La 3([3]\op [2,1]\op [1^3])$.
We have
\begin{align*}
S^3([3]\op [2,1]\op [1^3])
=&S^3[3]\op S^2[3]\ot [2,1]
\op S^2[3]\ot[1^3]\op [3]\ot S^2[2,1]
\op [3]\ot [21]\ot [1^3]\\
& \op [3]\ot S^2[1^3]\ot S^3[21]\op S^2[21]\ot[13]\op[21]\ot S^2[1^3]\op S^3[1^3]
\end{align*}
and four factors contain (or are) a trivial representation: $S^3[3],[3]\ot S^2[2,1],  [3]\ot S^2[1^3], S^3[21]$
Similarly
$$
\La 3([3]\op [21]\op [1^3])=
\La 2[21]\ot [3]\op \La 2[21]\ot [1^3]\op [3]\ot [21]\ot[1^3]
$$
of which $\La 2[21]\ot [1^3]$ is the unique trivial submodule.

In summary:

\begin{proposition} The space of $\FS_3\times \BZ_3$ invariants in $(U^*\ot U)^{\ot 3}$ when $\tdim U=2$ is
five dimensional.
\end{proposition}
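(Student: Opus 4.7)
The plan is to compute the $\FS_3\times\BZ_3$-invariants in two stages, exploiting that the two actions on $A^{\ot 3}$ commute: $\BZ_3$ permutes the three tensor slots while $\FS_3$ acts diagonally through $A=U^*\ot U=[21]\ot[21]$.

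First I would decompose $A^{\ot 3}$ under the slot-permuting $\FS_3$ via Schur--Weyl as $A^{\ot 3}= S^3 A\op S_{21}A^{\op 2}\op \La 3 A$. Restricting to $\BZ_3$, the modules $[3]$ and $[1^3]$ give the trivial character, while $[21]$ restricts to the sum of the two non-trivial characters of $\BZ_3$; hence the $\BZ_3$-invariants form $S^3 A\op\La 3 A$, and this subspace remains stable under the diagonal $\FS_3$-action because the two group actions commute.

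Next I would compute the $\FS_3$-invariants in $S^3 A\op\La 3 A$ by expanding $S^3(X\op Y\op Z)$ and $\La 3(X\op Y\op Z)$ summand by summand with $X=[3]$, $Y=[21]$, $Z=[1^3]$. Substituting the stated identities $S^2[21]=[3]\op[21]$, $\La 2[21]=[1^3]$, $S^3[21]=[3]\op[21]\op[1^3]$, together with the elementary $\FS_3$ tensor rules $[3]\ot M=M$, $[1^3]\ot[1^3]=[3]$, and the vanishing $\La 3[21]=0$ (since $\dim[21]=2$), one reads off the trivial summands directly: in $S^3 A$ four trivials appear, coming from $S^3[3]$, $[3]\ot S^2[21]$, $[3]\ot S^2[1^3]$, and $S^3[21]$; in $\La 3 A$ the unique trivial comes from $\La 2[21]\ot[1^3]=[3]$. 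Summing gives $4+1=5$.

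The main obstacle is purely bookkeeping: one must enumerate every summand of the two multinomial expansions and apply the correct tensor-product rule, taking care not to overlook cross terms such as $[3]\ot[21]\ot[1^3]$ or $S^2[3]\ot[21]$ (which happen to contain no trivial). As an independent sanity check, with $\chi_A=(4,0,1)$ on the classes $(e,(12),(123))$ the standard symmetric/exterior power formulas give $\chi_{S^3 A}=(20,0,2)$ and $\chi_{\La 3 A}=(4,0,1)$, so the multiplicity of the trivial $\FS_3$-representation in $S^3 A\op\La 3 A$ is $\tfrac{1}{6}(1\cdot 24+3\cdot 0+2\cdot 3)=5$, confirming the dimension.
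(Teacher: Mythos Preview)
Your proposal is correct and follows essentially the same route as the paper: identify the $\BZ_3$-invariants in $A^{\ot 3}$ as $S^3A\op\La 3 A$, expand each plethysm with $A=[3]\op[21]\op[1^3]$, and count the trivial $\FS_3$-summands to get $4+1=5$. The character computation you append as a sanity check is a nice independent confirmation not present in the paper, but the core argument is the same.
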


By a further direct calculation  we obtain:

\begin{proposition} The space of $\FS_3\times D_3$ invariants in $(U^*\ot U)^{\ot 3}$ when $\tdim U=2$ is
four dimensional.
\end{proposition}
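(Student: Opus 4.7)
The plan is to extend the previous proposition by imposing the additional $\BZ_2 \subset D_3$-invariance on the $5$-dimensional $\FS_3 \times \BZ_3$-invariant space.  The $\BZ_2$ is generated by the convenient transpose, which on $A^{\otimes 3}=(U^*\otimes U)^{\otimes 3}$ acts as $\sigma = \tau^{\otimes 3}\circ(23)$, where $\tau\colon A\to A$ is the matrix transpose and $(23)$ swaps the second and third tensor factors.  Working with the unique $\FS_3$-invariant symmetric form on $U=[21]$ makes $\tau$ equivariant for the diagonal $\FS_3$, so $\sigma$ preserves the $5$-dimensional invariant subspace and acts on it as an involution; the $\FS_3\times D_3$-invariants are its $(+1)$-eigenspace.

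I would then analyze $\sigma$ on each of the five one-dimensional pieces.  Under the identification $A=V\otimes V$ via the symmetric form, $\tau$ is the factor swap, hence $+1$ on $S^2V=[3]\oplus[21]$ and $-1$ on $\La^2V=[1^3]$.  Consequently $\tau^{\otimes 3}$ acts on a given piece by $(-1)^k$, where $k$ is the number of slots lying in the $[1^3]$-summand of $A$.  Meanwhile $(23)$ acts as $+1$ on $S^3A$ and $-1$ on $\La^3A$.  For the four pieces in $S^3A$ (namely $S^3[3]$, the trivial parts of $[3]\otimes S^2[21]$ and $[3]\otimes S^2[1^3]$, and the trivial part of $S^3[21]$) and the single piece in $\La^3A$ (the trivial part of $\La^2[21]\otimes[1^3]$), the $\sigma$-eigenvalue is the product of these two signs; summing the $(+1)$-multiplicities yields the dimension.

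The main obstacle is the sign bookkeeping on the $\La^3A$ summand: one must expand a generator $y_1\wedge y_2\wedge z$, with $y_i\in[21]$ and $z\in[1^3]$, explicitly as an antisymmetric sum in $A^{\otimes 3}$ and track carefully how the six terms transform under $\tau^{\otimes 3}$ and $(23)$ separately, since here both factors contribute nontrivially and their interaction determines whether this piece survives.  As an independent sanity check I would apply the Frobenius character formula $\dim V^G = |G|^{-1}\sum_{g\in G}\chi_V(g)$ with $G=\FS_3\times D_3$ and $V=A^{\otimes 3}$: the traces reduce to products of the character of $A$ as an $\FS_3$-module together with the traces of $\tau$ and $\tau^2=\mathrm{Id}$, and allow one to confirm the dimension directly from class functions without ever producing explicit invariant vectors.
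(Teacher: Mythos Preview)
Your strategy---restrict the extra $\BZ_2$ to the five $\FS_3\times\BZ_3$-invariant lines and read off the $(+1)$-eigenspace---is exactly the right shape, and the paper itself only says ``by a further direct calculation''.  However, the specific $\BZ_2$ you use is wrong, and with it your method yields $5$, not $4$.

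The $\BZ_2\subset D_3$ lying in $\Gamma_{\cS tr}$ is \emph{not} the convenient transpose.  As the paper notes just above the theorem of Burichenko, the convenient transpose moves $\cS tr$ to a different member of the family; the $\BZ_2$ that actually fixes $\cS tr$ is the convenient transpose followed by conjugation by $J=\begin{pmatrix}0&-1\\ 1&0\end{pmatrix}$.  On $A$ this map is $\tilde\tau(a)=J a^T J^{-1}=\operatorname{tr}(a)\Id-a$, the $2\times 2$ adjugate, which is $+1$ on $[3]$ and $-1$ on the traceless part $[21]\oplus[1^3]$.  By contrast your $\tau$ (transpose with respect to the $\FS_3$-invariant symmetric form) is $+1$ on the symmetric matrices $[3]\oplus[21]$ and $-1$ on $[1^3]$.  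These two involutions on $A$ have traces $-2$ and $+2$ respectively, so they are not conjugate; the $\FS_3\times D_3$ you build from $\tau$ is a genuine subgroup of $G_{\Mtwo}$ isomorphic to $\FS_3\times D_3$, but it is not conjugate to $\Gamma_{\cS tr}$, and its invariant space is different.

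Running your own sign bookkeeping with $\tau$ gives $\sigma=+1$ on all five lines (in particular on the $\La^3A$ piece the two $-1$'s cancel, as you suspected), so you would conclude the dimension is $5$.  With the correct $\tilde\tau$ the signs on $([3],[21],[1^3])$ are $(+,-,-)$, and the only casualty is the trivial summand of $S^3[21]$, where $\tilde\tau^{\otimes 3}=(-1)^3=-1$ while $(23)=+1$; the remaining four survive.  Your Frobenius check would have flagged the discrepancy, provided you feed it the trace of $\tilde\tau$ rather than of $\tau$.
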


So if we knew there were an $\FS_3\times D_3$ invariant decomposition of $\Mtwo$, it would be a simple
calculation to find it as a linear combination of four basis vectors of the $\FS_3\times D_3$-invariant tensors.
In future work we plan to assume similar invariance for larger matrix multiplication tensors to shrink the search
space to manageable size.

\bibliographystyle{amsplain}
 
\bibliography{Lmatrix}

\def\cdprime{$''$} \def\cprime{$'$} \def\cprime{$'$} \def\cprime{$'$}
  \def\Dbar{\leavevmode\lower.6ex\hbox to 0pt{\hskip-.23ex \accent"16\hss}D}
  \def\cprime{$'$} \def\cprime{$'$} \def\cdprime{$''$} \def\cprime{$'$}
  \def\cprime{$'$} \def\Dbar{\leavevmode\lower.6ex\hbox to 0pt{\hskip-.23ex
  \accent"16\hss}D} \def\cprime{$'$} \def\cprime{$'$} \def\cprime{$'$}
  \def\cprime{$'$} \def\Dbar{\leavevmode\lower.6ex\hbox to 0pt{\hskip-.23ex
  \accent"16\hss}D} \def\cprime{$'$} \def\cprime{$'$}
\providecommand{\bysame}{\leavevmode\hbox to3em{\hrulefill}\thinspace}
\providecommand{\MR}{\relax\ifhmode\unskip\space\fi MR }
\providecommand{\MRhref}[2]{%
  \href{http://www.ams.org/mathscinet-getitem?mr=#1}{#2}
}
\providecommand{\href}[2]{#2}
\begin{thebibliography}{1}

\bibitem{BILR}
Grey Ballard, Christian Ikenmeyer, J.M. Landsberg, and Nick Ryder, \emph{The
  geometry of rank decompositions of matrix multiplication ii: 3x3 matrices},
  preprint.

\bibitem{MR3217699}
H.~Bermudez, S.~Garibaldi, and V.~Larsen, \emph{Linear preservers and
  representations with a 1-dimensional ring of invariants}, Trans. Amer. Math.
  Soc. \textbf{366} (2014), no.~9, 4755--4780. \MR{3217699}

\bibitem{DBLP:journals/corr/Burichenko14}
Vladimir~P. Burichenko, \emph{On symmetries of the strassen algorithm}, CoRR
  \textbf{abs/1408.6273} (2014).

\bibitem{DBLP:journals/corr/Burichenko15}
\bysame, \emph{Symmetries of matrix multiplication algorithms. {I}}, CoRR
  \textbf{abs/1508.01110} (2015).

\bibitem{MR0506377}
Hans~F. de~Groote, \emph{On varieties of optimal algorithms for the computation
  of bilinear mappings. {I}. {T}he isotropy group of a bilinear mapping},
  Theoret. Comput. Sci. \textbf{7} (1978), no.~1, 1--24. \MR{0506377}

\bibitem{MR3513064}
Fulvio Gesmundo, \emph{Geometric aspects of iterated matrix multiplication}, J.
  Algebra \textbf{461} (2016), 42--64. \MR{3513064}

\bibitem{MR2865915}
J.~M. Landsberg, \emph{Tensors: geometry and applications}, Graduate Studies in
  Mathematics, vol. 128, American Mathematical Society, Providence, RI, 2012.
  \MR{2865915}

\bibitem{Strassen493}
Volker Strassen, \emph{Gaussian elimination is not optimal}, Numer. Math.
  \textbf{13} (1969), 354--356. \MR{40 \#2223}

\end{thebibliography}

\end{document}